\newtheorem{definition}{Definition}[section]
\newtheorem{lemma}[definition]{Lemma}
\newtheorem{theorem}[definition]{Theorem}
\numberwithin{equation}{section}
\def\bR{\mathbb{R}}
\def\bN{\mathbb{N}}
\def\bP{\mathbb{P}}
\def\bE{\mathbb{E}}
\def\cN{\mathcal{N}}
\def\cU{\mathcal{U}}
\def\cF{\mathcal{F}}
\def\cM{\mathcal{M}}
\def\cG{\mathcal{G}}
\def\cL{\mathcal{L}}
\def\cI{\mathcal{I}}
\def\b1{\mathds{1}}
\def\tr{\mathrm{tr}\;}
\def\cO{\mathcal{O}}
\newcommand{\vertiii}[1]{{\left\vert\kern-0.25ex\left\vert\kern-0.25ex\left\vert #1 
    \right\vert\kern-0.25ex\right\vert\kern-0.25ex\right\vert}}
\begin{document}
\title[]{A note on dependent random variables in quantum dynamics}

\author{Simone Rademacher}
\address{IST Austria, Am Campus 1, 3400 Klosterneuburg, Austria}

\date{\today}

\maketitle

\begin{abstract}
We consider the many-body time evolution of weakly interacting bosons in the mean field regime for initial coherent states.  We show that bounded $k$-particle operators, corresponding to dependent random variables, satisfy both, a law of large numbers and a central limit theorem. 
\end{abstract}

\section{Introduction and main results }
We consider $N$ weakly interacting bosons in the mean-field regime described on $L_s^2( \bR^{3N} )$, the symmetric subspace of $L^2 ( \bR^{3N})$,  by the Hamilton operator 
\begin{align}
\label{eq:ham}
H_N = \sum_{i=1}^N \left( - \Delta_i \right) + \frac{1}{N} \sum_{i<j=1}^N v(x_i-x_j)
\end{align}
with two-body interaction potential $v$ satisfying 
\begin{align}
v^2 \leq C ( 1- \Delta ) 
\end{align}
for a positive constant $C>0$. The mean-field regime is characterized through weak and long-range interactions of the particles.  Trapped Bose gases at extremely low temperatures,  as prepared in the experiments, are known to relax to the ground state.  The ground state $\psi_{N}^{\rm gs}$ of \eqref{eq:ham} exhibits Bose-Einstein condensation \cite{LS_condensation},  i.e.  the associated $\ell$-particle reduced density 
\begin{align}
\label{eq:red-dens-k}
\gamma_{\psi_{N}^{\rm gs}}^{(\ell)} := \tr_{\ell +1, \dots, N} \ket{\psi_{N}^{\rm gs}} \bra{\psi_{N}^{\rm gs}} 
\end{align}
converges in trace norm to 
\begin{align}
\label{eq:BEC2}
\gamma_{\psi_{N}^{\rm gs}}^{(\ell)} \rightarrow \ket{\varphi}  \bra{\varphi}^{\otimes \ell} \quad \text{as} \quad N \rightarrow \infty 
\end{align}
for all $\ell \in \bN$, where $\varphi \in L^2 (\bR^3)$ denotes the condensate wave function, known to be the Hartree minimizer.  However, we remark that the factorized state $\varphi^{\otimes N}$ does not approximate the ground state due to correlations of the particles \cite{GS}. 

\subsection{Law of large numbers} Turning to the probabilistic picture,  the property of Bose-Einstein condensation \eqref{eq:BEC2} implies a law of large numbers for bounded one-particle operators \cite{BKS}.  To be more precise,  for $k\in \bN$ we denote with $O^{(k)}$ a bounded, self-adjoint $k$-particle operator  on $L^2( \bR^{3k})$ and with $\underline{i}_k$ the multi-index 
\begin{align}
\label{eq:multi}
\underline{i}_k = ( i_1, \dots, i_k ) \in \cI_N^{(k)}, \quad \cI_N^{(k)}:= \left\lbrace ( i_1, \dots, i_k ) \in \lbrace 1, \dots, N \rbrace^k \vert \;  i_j \not= i_m \quad \text{for}\quad  j \not= m\right\rbrace .
\end{align} 
Then, we define for fixed $k \leq N$, the $N$-particle operator 
\begin{align}
\label{def:O}
O_{\underline{i}_k}^{(k)} \quad \text{where} \quad \underline{i}_k \in \cI_N^{(k)} 
\end{align}
acting as $O^{(k)}$ on the particles $i_{1}, \dots, i_k $ and as an identity elsewhere.  We consider the operator $O^{(k)}_{\underline{i}_k}$ as a random variable with probability distribution determined through $\psi_N$ by
\begin{align}
\bP_{\psi_N} \left[ O^{(k)}_{\underline{i}_k } \in A \right]  = \bE_{\psi_N} \left[\chi_A \left( O^{(k)}_{\underline{i}_k} \right) \right] =   \expval{\chi_A \left( O^{(k)}_{\underline{i}_k} \right) }{\psi_N}
\end{align}
 where $\chi_A$ denotes the characteristic function of the set $A \subset \bR$.  
 
 For one-particle operators,  factorized states correspond to i.i.d. random variables as for any subsets $A_1,A_2 \subset \bR$ and $i,j \in \cI_N^{(1)}$ with $i \not=j$
 \begin{align}
 \bP_{ \varphi^{\otimes N}}  \left[ O^{(1)}_i \in A_1, \; O^{(1)}_j \in A_2 \right] =& \expval{\chi_{A_1} \left( O^{(1)}_i \right) \chi_{A_2} \left( O^{(1)}_j \right) }{\varphi^{\otimes N}} \notag\\
 =& \expval{\chi_{A_1} \left( O^{(1)} \right) }{\varphi}  \expval{\chi_{A_2} \left( O^{(1)} \right) }{\varphi} \notag\\
 =&  \expval{\chi_{A_1} \left( O^{(1)}_i \right) }{\varphi^{\otimes N}} \expval{\chi_{A_2} \left( O^{(1)}_j \right) }{\varphi^{\otimes N}} \notag\\
 =&  \bP_{ \varphi^{\otimes N}}  \left[ O^{(1)}_i \in A_1 \right]  \;  \bP_{ \varphi^{\otimes N}}  \left[ O^{(1)}_j \in A_2 \right] \; .  \label{eq:independent}
 \end{align}
In particular, for factorized states Chebychef's inequality implies a law of large numbers for the centred averaged sum 
  \begin{align}
\frac{1}{N}O^{(1)}_{N}: = \frac{1}{N} \sum_{i =1}^N \left( O^{(1)}_{i} - \expval{O^{(1)}}{\varphi} \right)    \; . 
 \end{align}
In contrast to one-particle operators,  for $k$-particle operators with $k \geq 2$,  factorized states do not correspond to i.i.d. random variables.  In fact,  for $k\geq 2$,  we have 
 \begin{align}
 \bE_{\varphi^{\otimes N}} \left[ \left( O^{(k)}_{\underline{i}_k} - \expval{O^{(k)}}{\varphi^{\otimes k}} \right)\left( O^{(k)}_{\underline{j}_k} - \expval{O^{(k)}}{\varphi^{\otimes k}} \right)\right] \not= 0  \label{eq:dependence}
 \end{align}
for all $\underline{i}_k \not= \underline{j}_k $ for which  $\underline{i}_k$ contains at least one element of $\underline{j}_k$. We conclude that in this case,  the random variables are correlated and, thus, dependent.  In contrast, whenever $\underline{i}_k$ does not intersect with $\underline{j}_k$, the random variables $O^{(k)}_{\underline{i}_k}$, $O^{(k)}_{\underline{j}_k}$ are independent (following from arguments similarly to \eqref{eq:independent}).  Consequently,  for factorized states,   the random variables $\lbrace O^{(k)}_{\underline{i}_k} \rbrace_{\underline{i}_k \in \cI_N^{(k)} }$  denote a sequence of $m$-dependent of random variables with $m \in \bR$.  Still, as the following theorem shows,  the centred averaged sum 
 \begin{align}
 \label{def:sum-0}
\frac{1}{\tbinom{N}{k}} O^{(k)}_{N}: = \frac{1}{\tbinom{N}{k}} \sum_{\underline{i}_k \in \cI_N^{(k)}} \left( O^{(k)}_{\underline{i}_k} - \expval{O^{(k)}}{\varphi^{\otimes k}} \right)   
 \end{align}
satisfies a law of large numbers. 
 
\begin{theorem}[Law of Large Numbers]
\label{thm:lln}
For $k \in \bN$,  let $O^{(k)}$ denote a self-adjoint bounded $k$-particle operator, $\varphi \in L^2 (\bR^3 )$ and $\psi_{N} \in L^2_s \left( \bR^{3N} \right)$ a bosonic wave function satisfying 
\begin{align}
\label{ass:lln}
\gamma_{\psi_{N}}^{(\ell)} \rightarrow \ket{\varphi}  \bra{\varphi}^{\otimes \ell} \quad \text{as} \quad N \rightarrow \infty \; 
\end{align}
for all $\ell \in \bN$. Then,  for any fixed $k \in \bN$ and $\delta >0$, the averaged sum $O_N^{(k)}$ defined in \eqref{def:sum-0} satisfies 
 \begin{align}
 \bP_{\psi_{N}} \left[ \left\vert \frac{1}{\tbinom{N}{k}}  O^{(k)}_{N} \right\vert > \delta \right] \rightarrow 0 , \quad \text{as} \quad N \rightarrow \infty \; .
 \end{align}
 \end{theorem}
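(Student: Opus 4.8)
The plan is to deduce the law of large numbers from a second-moment (Chebyshev) estimate, exploiting the bosonic symmetry of $\psi_N$ together with the trace-norm convergence \eqref{ass:lln} of the reduced densities. Since $O^{(k)}$ is bounded and self-adjoint, the constant $c:=\expval{O^{(k)}}{\varphi^{\otimes k}}$ is real and $O_N^{(k)}$ is self-adjoint, so by the spectral theorem and the pointwise bound $\chi_{\{|x|>\delta\}}(x)\le x^2/\delta^2$,
\[
\bP_{\psi_N}\!\left[\left\vert \tfrac{1}{\tbinom{N}{k}} O_N^{(k)}\right\vert>\delta\right]\le \frac{1}{\delta^2\,\tbinom{N}{k}^2}\,\expval{\big(O_N^{(k)}\big)^2}{\psi_N}.
\]
It therefore suffices to show that the right-hand side vanishes as $N\to\infty$.

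Expanding the square, I would split the double sum according to the size $r=|\{i_1,\dots,i_k\}\cap\{j_1,\dots,j_k\}|$ of the overlap of the two multi-indices,
\[
\expval{\big(O_N^{(k)}\big)^2}{\psi_N}=\sum_{\underline{i}_k,\underline{j}_k\in\cI_N^{(k)}}\expval{\big(O^{(k)}_{\underline{i}_k}-c\big)\big(O^{(k)}_{\underline{j}_k}-c\big)}{\psi_N}.
\]
For the \emph{overlapping} pairs ($r\ge1$) the crude bound $\|O^{(k)}_{\underline{i}_k}-c\|\le 2\|O^{(k)}\|$ shows each summand is bounded by $4\|O^{(k)}\|^2$; since for fixed $r$ there are at most $O(N^{2k-r})$ such pairs, the total number of overlapping pairs is $O(N^{2k-1})$, and after dividing by $\tbinom{N}{k}^2\sim N^{2k}/(k!)^2$ this contributes $O(1/N)\to0$.

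The delicate part is the \emph{disjoint} pairs ($r=0$), of which there are $N!/(N-2k)!\sim N^{2k}$, comparable to the normalisation, so that a naive ``each term tends to $0$'' argument is inconclusive. Here I would invoke the bosonic symmetry: because $\gamma^{(\ell)}_{\psi_N}$ is permutation invariant, every disjoint pair yields the \emph{same} value
\[
b_N:=\expval{O^{(k)}_{\underline{i}_k}O^{(k)}_{\underline{j}_k}}{\psi_N}-c\,\expval{O^{(k)}_{\underline{i}_k}}{\psi_N}-c\,\expval{O^{(k)}_{\underline{j}_k}}{\psi_N}+c^2,
\]
which, since disjoint $O^{(k)}_{\underline{i}_k}$ and $O^{(k)}_{\underline{j}_k}$ commute (so no ordering issue arises), is expressible through the reduced densities as $b_N=\tr\big(\gamma^{(2k)}_{\psi_N}(O^{(k)}\otimes O^{(k)})\big)-2c\,\tr\big(\gamma^{(k)}_{\psi_N}O^{(k)}\big)+c^2$. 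By assumption \eqref{ass:lln} the densities converge in trace norm to the factorised projections, and since $O^{(k)}$ and $O^{(k)}\otimes O^{(k)}$ are bounded, $\tr(\gamma^{(k)}_{\psi_N}O^{(k)})\to c$ and $\tr(\gamma^{(2k)}_{\psi_N}(O^{(k)}\otimes O^{(k)}))\to c^2$, whence $b_N\to c^2-2c^2+c^2=0$. Because the combinatorial prefactor $N!/\big((N-2k)!\,\tbinom{N}{k}^2\big)$ stays bounded (it tends to $(k!)^2$), the disjoint contribution equals $(\text{bounded})\times b_N\to0$.

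Combining the two estimates shows the second moment vanishes, which with the Chebyshev bound proves the claim. The main obstacle, and the reason the statement is not entirely routine, is exactly the treatment of the disjoint pairs: their number matches the square of the normalisation, so the proof hinges on collapsing them to the single value $b_N$ via bosonic symmetry and then extracting genuine decay ($b_N\to0$) from the trace-norm convergence \eqref{ass:lln} at the two levels $k$ and $2k$.
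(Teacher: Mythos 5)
Your proposal is correct and takes essentially the same route as the paper's proof: Chebyshev's inequality, decomposition of the double sum by the overlap of the two multi-indices, a crude norm-and-counting bound showing the overlapping pairs contribute $O(1/N)$, and trace-norm convergence of the reduced densities to handle the disjoint pairs whose count matches the normalisation. Your explicit expansion of $b_N$ into the three traces $\tr \gamma^{(2k)}_{\psi_N}\bigl(O^{(k)}\otimes O^{(k)}\bigr) - 2c\,\tr \gamma^{(k)}_{\psi_N}O^{(k)} + c^2 \to 0$ is simply the unpacked form of the paper's observation that $\tr \gamma^{(2k)}_{\psi_N}\bigl(\widetilde{O}^{(k)}\otimes \widetilde{O}^{(k)}\bigr)$ converges to the vanishing expectation of the centred operator in the factorised state.
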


For factorized states, we have $\gamma_{\varphi^{\otimes N}}^{(\ell)} = \ket{\varphi}\bra{\varphi}^{\otimes \ell}$, and a law of large numbers follows from Theorem \ref{thm:lln}.  

In particular, Theorem \ref{thm:lln} shows that the property of condensation \eqref{ass:lln} implies a law of large numbers for bounded $k$-particle operators for fixed $k \in \bN$.  Thus, Theorem \ref{thm:lln} generalizes known results from \cite{BKS} for bounded one-particle operators to $k$-particle operators with fixed $k \in \bN$. 
We recall that the ground state $\psi_{N}^{\rm gs}$ of \eqref{eq:ham} can not be approximated by a factorized state,  nonetheless the condensation property \eqref{eq:BEC2} ensures that bounded $k$-particle operators satisfy a law of large numbers for $\psi_N^{\rm gs}$, too. 

We are interested in the dynamics of initially trapped Bose gases. Removing the trap, the bosons evolve with respect to the Schr\"odinger equation 
\begin{align}
\label{eq:Schroe}
i \partial_t \psi_{N,t} = H_N \psi_{N,t} 
\end{align}
with $H_N$ the mean-field Hamiltonian given in \eqref{eq:ham}.  In the following, we consider coherent initial data, i.e. initial data of the form 
\begin{align}
\label{eq:coherent}
\psi_{N,0} = W ( \sqrt{N} \varphi ) \Omega
\end{align}
where $\Omega$ denotes the vacuum of the bosonic Fock space $\cF = \bigoplus_{n \geq 0} L^2 (\bR^3)^{\otimes_s^n}$ equipped with creation and annihilation operators $a^*(f),a(f)$ for $f\in L^2( \bR^3)$,  $W( f ) = e^{a^*(f) - a (f)} $ denotes the Weyl operator and $\varphi \in H^1( \bR^3)$ the condensate wave function. Coherent states of the form \eqref{eq:coherent} exhibit Bose-Einstein condensation in the quantum state $\varphi$, i.e. they satisfy \eqref{eq:BEC2}.  Thus,  it follows from Theorem \ref{thm:lln} that initially, the random variables $O_{\underline{i}_k}^{(k)}$ satisfy a law of large numbers. 
The property of condensation is preserved along the many-body time evolution \cite[Theorem 3.1]{BPS_book}, i.e.  the $\ell$-particle reduced density $\gamma_{N,t}^{(\ell)}$ associated to $\psi_{N,t}$ satisfies 
\begin{align}
\label{eq:BEC-dyn}
\gamma_{N, t}^{(\ell)} \rightarrow \ket{\varphi_t}  \bra{\varphi_t}^{\otimes \ell} , \quad \text{as} \quad N \rightarrow \infty \quad \text{for all} \quad \ell \in \bN
\end{align}
where $\varphi_t \in H^1( \bR^3)$ denotes the solution to the Hartree equation
\begin{align}
\label{eq:hartree}
i \partial_t \varphi_t = h_{\rm H} (t) \varphi_t, \quad \text{with} \quad  h_{\rm H} (t) = -\Delta + (v* \vert \varphi_t \vert^2) 
\end{align}
with initial data $\varphi_0 = \varphi \in H^1( \bR^3)$ (for further references see e.g. \cite{AGT,AFP,CLS,FKS,GV,KP,RodS,Spohn_mf}).  Theorem \ref{thm:lln} and \eqref{eq:BEC-dyn} show that the centred averaged sum
 \begin{align}
 \label{def:sum}
\frac{1}{\tbinom{N}{k}} O^{(k)}_{N}: = \frac{1}{\tbinom{N}{k}} \sum_{\underline{i}_k \in \cI_N^{(k)}} \left( O^{(k)}_{\underline{i}_k} - \expval{O^{(k)}}{\varphi_t^{\otimes k}} \right)   
 \end{align}
satisfies a law of large numbers for positive times $t>0$, too, i.e.  for any $\delta >0$
  \begin{align}
 \bP_{\psi_{{N,t}}} \left[ \left\vert \frac{1}{\tbinom{N}{k}}  O^{(k)}_{N,t} \right\vert > \delta \right] \rightarrow 0 , \quad \text{as} \quad N \rightarrow \infty \; \quad \text{for all } \quad t \in \bR \; .
 \end{align}

\subsection{Central limit theorem} While the law of large numbers characterizes the mean of the probability distribution, fluctuations around are governed through the central limit theorem.  Before stating our result on a central limit theorem for fluctuations of order $O(N^{k-1/2})$, we introduce some notations. For a bounded $k$-operator $O^{(k)}$ and $\varphi \in L^2( \bR^3)$, we define 
\begin{align}
\label{eq:def-h}
 & \left( \overline{\varphi}_t^{\otimes (k-1)} O^{(k)} \varphi_t^{\otimes k} \right)_j (x)\notag\\
&\hspace{0.5cm} :=  \int dx_1 \dots dx_{i-1} dx_{i+1} \dots  dx_k dy_1 \dots dy_k \;  \widetilde{O}^{(k)}(x_1, \dots,x_{i-1},x,x_{i+1}, \dots, x_k; y_1, \dots y_k )\notag\\
& \hspace{3cm} \times  \prod_{\substack{i=1\\ i \not= j}}^k \;  \overline{\varphi}_t (x_i)   \prod_{m=1}^k \varphi_t (y_m)
\end{align}
and furthermore, for $t\in\bR$, $0 \leq s \leq t$ and $j \in \lbrace 1, \dots, k \rbrace$ the function $f_{s;t}^{(j)}$ by 
\begin{align}
\label{def:f}
i \partial_s f_{s;t}^{(j)} = \left( h_{\rm H} (s) + K_{1,s} - K_{2,s} J \right) f_{s;t}^{(j)}, \quad \text{with} \quad f_{t;t}^{(j)}= q_t \left( \overline{\varphi}_t^{\otimes (k-1)} O^{(k)} \varphi_t^{\otimes k} \right)_j
\end{align}
with the anti-linear operator $J f = \overline{f}$ for any $f \in L^2( \bR^3)$, $q_t = 1 - \ket{\varphi_t}\bra{\varphi_t}$,  the Hartree Hamiltonian $h_{\rm H}$ defined in \eqref{eq:hartree} and the operators 
\begin{align}
\label{eq:def-K}
K_{1,t} (x;y) = \varphi_t(x) v(x-y) \varphi_t(y),\quad K_{2,t} (x;y) = \varphi_t (x) v(x-y) \varphi_t (y) \; . 
\end{align}

\begin{theorem}[Central Limit Theorem]
\label{thm:CLT}
For $k,N \in \bN$ with $k \leq N$,  let $O^{(k)}$ be a self-adjoint bounded $k$-particle operator and  $\varphi_t$ the solution to the Hartree equation \eqref{eq:hartree} with initial datum $\varphi_0 = \varphi \in H^1( \bR^3)$. 
Let  $\psi_{N,t} \in L^2_s( \bR^{3N})$ denote the solution to the Schr\"odinger equation \eqref{eq:Schroe} with initial datum of the form $\psi_{N,0} = W( \sqrt{N} \varphi ) \Omega$. 

Let $a,b \in \bR$ with $a<b$,  then there exists a constant $C_{a,b,k} >0$ such that the centred averaged sum $O_{N,t}^{(k)}$ defined in \eqref{def:sum} satisfies 
\begin{align}
\label{eq:CLT-thm}
\left\vert \bP_{\psi_{N,t}} \left[ N^{-k+1/2} O_{N,t}^{(k)} \in [a, b ] \right] - \bP \left[ \cG_t \in [a, b ] \right] \right\vert \leq C_{a,b,k} \;  e^{C \vert t \vert}N^{-1/12}  
\end{align}
where $\cG_t$ denotes the Gaussian random variable with variance given by 
 \begin{align}
\label{eq:cov}
\sigma^2_t = \sum_{i,j=1}^k \cM_{t;0} (i,j) = \sum_{i,j=1}^k \bra{f_{0;t}^{(i)}}\ket{f_{0;t}^{(j)}}
\end{align}
\end{theorem}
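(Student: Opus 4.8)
The plan is to prove the convergence in distribution \eqref{eq:CLT-thm} by controlling the characteristic function of $N^{-k+1/2}O_{N,t}^{(k)}$, following the strategy developed for one-particle observables in \cite{BKS} but carried out for $k$-particle operators. First I would reduce the statement on the distribution function to a bound on the Fourier transform: by a quantitative smoothing (Berry--Esseen type) inequality, \eqref{eq:CLT-thm} follows once one shows, for a Gaussian variable $\cG_t$ of variance $\sigma_t^2$, that
\begin{align*}
\left\vert \bE_{\psi_{N,t}}\left[ e^{i s\, N^{-k+1/2} O_{N,t}^{(k)}} \right] - e^{-s^2 \sigma_t^2/2} \right\vert \leq C\, e^{C\vert t\vert}\, (1+\vert s\vert^3)\, N^{-1/6}, \qquad \vert s\vert \leq T,
\end{align*}
and then balances the averaged characteristic-function error against the cutoff $1/T$; choosing $T$ an appropriate power of $N$ produces the exponent $N^{-1/12}$.

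Next I would pass to second quantization. On the $N$-particle sector the sum $\sum_{\underline{i}_k \in \cI_N^{(k)}} O^{(k)}_{\underline{i}_k}$ equals a normal-ordered polynomial $\int \widetilde{O}^{(k)}(\mathbf{x};\mathbf{y})\, a_{x_1}^*\cdots a_{x_k}^* a_{y_k}\cdots a_{y_1}$ in the creation and annihilation operators. Using the coherent structure $\psi_{N,0}=W(\sqrt N\varphi)\Omega$ and writing $\psi_{N,t}=W(\sqrt N\varphi_t)\,\cU_N(t;0)\Omega$ with the fluctuation dynamics $\cU_N(t;0)=W^*(\sqrt N\varphi_t)\,e^{-iH_N t}\,W(\sqrt N\varphi)$, the Weyl shift $a_x\mapsto a_x+\sqrt N\varphi_t(x)$ expands each monomial into contributions classified by the number of surviving field operators. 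The fully classical term reconstructs the subtracted mean $\binom{N}{k}\langle O^{(k)}\rangle_{\varphi_t^{\otimes k}}$ up to lower order, and the term with exactly one surviving operator yields, after collecting all combinatorial prefactors against $\binom{N}{k}$ and the scaling $N^{-k+1/2}$, the collective field
\begin{align*}
\phi(h_t):=a(h_t)+a^*(h_t),\qquad h_t=\sum_{j=1}^k q_t\left( \overline{\varphi}_t^{\otimes(k-1)} O^{(k)} \varphi_t^{\otimes k}\right)_j ,
\end{align*}
which is of order one, while every term with two or more surviving operators carries an extra factor $N^{-1/2}$ and is therefore subleading.

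I would then transport the observable to the initial time. Using that $\cU_N(t;0)$ converges, with an error of size $e^{C\vert t\vert}N^{-1/2}$ times moments of $\cN$, to the limiting Bogoliubov dynamics $\cU_\infty(t;0)$ generated by the quadratic Hamiltonian built from $h_{\rm H}(s)$, $K_{1,s}$ and $K_{2,s}$, the Heisenberg evolution of the one-particle test function $h_t$ is governed precisely by the symplectic flow \eqref{def:f}; transporting $q_t(\overline{\varphi}_t^{\otimes(k-1)}O^{(k)}\varphi_t^{\otimes k})_j$ from time $t$ back to $s=0$ produces $f_{0;t}^{(j)}$, and the $K_{2,s}J$ term is exactly the contribution of the anti-linear (pairing) part of the flow. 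Hence the leading field becomes $\phi\big(\sum_j f_{0;t}^{(j)}\big)$, and the quasi-free vacuum expectation $\langle \Omega, e^{i s\,\phi(g)}\Omega\rangle = e^{-s^2\|g\|^2/2}$ with $g=\sum_j f_{0;t}^{(j)}$ gives $e^{-s^2\sigma_t^2/2}$, since $\|g\|^2=\sum_{i,j}\langle f_{0;t}^{(i)}, f_{0;t}^{(j)}\rangle=\sigma_t^2$ by \eqref{eq:cov}. The crucial auxiliary input is the propagation of number-operator moments, $\langle \cU_N(t;0)\Omega,\, \cN^{j}\,\cU_N(t;0)\Omega\rangle \leq C\,e^{C\vert t\vert}$ uniformly in $N$, which bounds all the subleading field contributions.

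The main obstacle I anticipate is the simultaneous control of two competing sources of error while extracting the explicit rate. On one hand, expanding $e^{is\phi(\cdot)}$ and estimating the higher-order (quadratic and beyond) field terms requires high moments of $\cN$, whose growth must be beaten by the $N^{-1/2}$ gain per extra operator; on the other hand, replacing $\cU_N(t;0)$ by $\cU_\infty(t;0)$ contributes an error of the same order, both amplified by the cubic factor $\vert s\vert^3$ coming from the Taylor remainder. The $m$-dependence of the underlying random variables for $k\geq 2$, seen in \eqref{eq:dependence}, manifests itself here only as additional coincidence terms in the multi-index counting, which are lower order and reproduce the cross-covariances $\langle f_{0;t}^{(i)}, f_{0;t}^{(j)}\rangle$ with $i\neq j$; the quantum framework handles the dependence automatically. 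Optimizing the cutoff $T$ in the Berry--Esseen step against the accumulated characteristic-function error is precisely what yields the stated rate $e^{C\vert t\vert}N^{-1/12}$.
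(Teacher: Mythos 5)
Your proposal is correct and follows essentially the same route as the paper: reduction of \eqref{eq:CLT-thm} to a quantitative characteristic-function estimate via Fourier smoothing of $\chi_{[a,b]}$ (Theorem \ref{thm:char} and Section \ref{sec:CLT}), then Weyl-shift expansion of the second-quantized, centred observable to extract the linear field $\phi(h_t)$ with remainder of size $N^{-1/2}$ per surviving creation/annihilation operator (Lemma \ref{lemma:step2}), and finally Bogoliubov transport of $h_t$ through \eqref{eq:bogodyn} to $f_{0;t}$, whose vacuum expectation yields the Gaussian with variance $\sigma_t^2$ as in \eqref{eq:cov} (Lemma \ref{lemma:step3}). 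The one cosmetic difference is that you phrase the many-body-to-Bogoliubov replacement as a comparison of the full fluctuation dynamics $\cU_N$ with $\cU_\infty$ requiring propagation of $\cN$-moments along $\cU_N$, whereas the paper short-circuits this (its stated point of departure from \cite{BKS,BSS}, allowing arbitrary bounded observables) by invoking the state-level norm approximation \eqref{eq:norm-approx} directly in Lemma \ref{lemma:step1}, which by unitarity of $e^{i\lambda O^{(k)}_{N,t}}$ is all that is needed.
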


We remark that for a factorized state, we can explicitly compute the variance 
\begin{align}
\sigma_N^2 =& \bE_{\varphi^{\otimes N}} \left[ \left( O^{(k)}_N\right)^2 \right]  - \bE_{\varphi^{\otimes N}} \left[ O^{(k)}_N \right]^2 \notag\\
=&  \sum_{\underline{i}_k, \underline{j}_k \in I_N^{(k)}} \bE_{\varphi^{\otimes N}} \left[ \widetilde{O}_{\underline{i}_k}^{(k)}\widetilde{O}_{\underline{j}_k}^{(k)} \right] - \left(  \sum_{\underline{i}_k\in I_N^{(k)}} \bE_{\varphi^{\otimes N}} \left[ \widetilde{O}_{\underline{i}_k}^{(k)} \right] \right)^2 \label{eq:variance}
\end{align}
where we introduced the centred $k$-particle operator 
\begin{align}
\label{eq:def-Otilde}
\widetilde{O}^{(k)} = O^{(k)} - \expval{O^{(k)}}{\varphi^{\otimes k}} \; .
\end{align}
The last sum of the r.h.s. of \eqref{eq:variance} vanishes.  Furthermore, the first sum vanishes whenever $\underline{j}_k$ does not intersect with $\underline{i}_k$ and we find for the remaining terms 
\begin{align}
\sigma_N^2 =& \sum_{i,j=1}^k \frac{N \cdots (N-2k+1)}{k!(k-1)!} \;  \cM_{\varphi^{\otimes N}} (i,j) + O \left( N^{2k-2} \right)
\end{align}
using the definition  
\begin{align}
\label{eq:cov-prod}
\cM_{\varphi^{\otimes N}} (i,j)=& \braket{ \left( \overline{\varphi}^{\otimes (k-1)} \widetilde{O}^{(k)} \varphi^{\otimes k} \right)_i}{ \left( \overline{\varphi}^{\otimes( k-1)} \widetilde{O}^{(k)} \varphi^{\otimes k} \; \right)_j}_{L^2(\bR^3)} \notag\\
=& \braket{ q \left( \overline{\varphi}^{\otimes (k-1)} O^{(k)} \varphi^{\otimes k} \right)_i}{q  \left( \overline{\varphi}^{\otimes( k-1)} O^{(k)} \varphi^{\otimes k} \; \right)_j}_{L^2(\bR^3)} 
\end{align}
with $q = 1 - \ket{\varphi}\bra{\varphi}$ and \eqref{eq:def-h}. In particular, we observe that the variance scales as $\sigma_N^2 = O(N^{2k-1}) $ and thus, we expect fluctuations to be $O(N^{k-1/2})$. 

We observe that Theorem \ref{thm:CLT} shows that the fluctuations of the many-body dynamics scale similarly to the fluctuations of a factorized state. Moreover, for  $t=0$ the variance $\sigma_0^2$ of the many-body dynamics defined in \eqref{eq:cov} agrees with the covariance matrix $\cM_{\varphi^{\otimes N}} (i,j) $ in \eqref{eq:cov-prod} of a factorized state.  

We remark that for $k=1$, i.e. considering bounded one-particle observables, Theorem \ref{thm:CLT} generalizes known results \cite{BKS,BSS} to more general one-particle observables. This generalization is due a different strategy of the proof of Theorem \ref{thm:char} than in \cite{BKS,BSS} . We follow the ideas of \cite{BSS},  however,  we use as a first step in Lemma \ref{lemma:step1} directly the norm approximation \eqref{eq:norm-approx} of the many-body time evolution (for more details see Section \ref{sec:proof}). 

Recently, for one-particle operators the probability distribution's tails were characterized through large deviation estimates \cite{KRS,RSe}, showing that 
\begin{align}
\lim_{N \rightarrow \infty}\frac{1}{N} \log \bP_{\psi_{N,t}} \left[ \tfrac{1}{N} O_{N,t}^{(1)} > x \right]=  - \frac{x^2}{\| \widetilde{f}_{0;t}^{(1)}\|_2^2} + O(x^{5/2} )
\end{align}
for sufficiently small $x \leq C e^{-e^{C \vert t \vert }}$ where $\widetilde{f}_{t,0}^{(1)}$ is defined similarly to \eqref{def:f}, but using the projected kernels $\widetilde{K}_{j,s} (x,y)= q_s K_{j,s} (x,y) q_s$. 

Furthermore, for one-particle operators, a central limit theorem is proven for stronger particles' interactions in the intermediate regime \cite{R}, interpolating between the mean-field and the Gross-Pitaevski regime.  In the Gross-Pitaevski regime of singular particles' interaction, a central limit theorem is proven for quantum fluctuations in the ground state \cite{RS_CLT}, too. 

Theorem \ref{thm:CLT} follows from an approximation of the random variable's characteristic function given in the following: 

\begin{theorem}\label{thm:char}
Under the same assumptions as in Theorem \ref{thm:CLT}, we have 
\begin{align}
&\left\vert \bE_{\psi_{N,t}} \left[ e^{i N^{-k+1/2} O_{N,t}^{(k)}} \right]  - e^{- \sigma_t^2 /2}  \right\vert\leq C_k e^{C \vert t \vert } \| O^{(k)}\|_{\rm op} \sum_{\ell=1}^{2k-1} N^{-j/2} \big( 1 + \sum_{i,j} \| O^{(k)} \|_{\rm op}^{2}  \big)^{(j+1)/2} 
\end{align}
\end{theorem}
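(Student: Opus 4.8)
The strategy is to reduce the characteristic function to a vacuum expectation of the exponential of a self-adjoint field operator, whose value is Gaussian by the Weyl relation $\expval{e^{i\phi(g)}}{\Omega}=e^{-\|g\|^2/2}$ with $\phi(g)=a(g)+a^*(g)$; the variance then emerges automatically in the form \eqref{eq:cov}. First I would factor out the condensate. Writing $\psi_{N,t}=e^{-iH_Nt}W(\sqrt N\varphi)\Omega$, introducing the fluctuation dynamics $\cU_N(t)=W^*(\sqrt N\varphi_t)e^{-iH_Nt}W(\sqrt N\varphi)$ and setting $\widehat O_{N,t}:=W^*(\sqrt N\varphi_t)\,O_{N,t}^{(k)}\,W(\sqrt N\varphi_t)$, one has
\[
\bE_{\psi_{N,t}}\big[e^{iN^{-k+1/2}O_{N,t}^{(k)}}\big]=\expval{e^{iN^{-k+1/2}\widehat O_{N,t}}}{\cU_N(t)\Omega}.
\]
Lemma \ref{lemma:step1}, which rests on the norm approximation \eqref{eq:norm-approx}, then replaces $\cU_N(t)\Omega$ by $\cU_2(t)\Omega$, the Bogoliubov evolution applied to the vacuum; since $e^{iN^{-k+1/2}\widehat O_{N,t}}$ is unitary, the induced error is bounded by the norm approximation and is absorbed into the $j=1$ term of the claimed bound.

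Next I would expand $\widehat O_{N,t}$ in the fluctuation field. In second quantization $O_{N,t}^{(k)}=\int O^{(k)}(x;y)\,a^*(x_1)\cdots a^*(x_k)\,a(y_k)\cdots a(y_1)\,dx\,dy$, and the Weyl shift sends $a^*(x)\mapsto a^*(x)+\sqrt N\,\overline\varphi_t(x)$ and $a(y)\mapsto a(y)+\sqrt N\,\varphi_t(y)$. Collecting the resulting terms by the number $m$ of surviving fluctuation operators, the $m=0$ contribution equals $N^{k}\expval{O^{(k)}}{\varphi_t^{\otimes k}}$ and is cancelled by the centering in \eqref{def:sum}; the $m=1$ contribution, using the self-adjointness of $O^{(k)}$ and the contraction \eqref{eq:def-h}, equals $N^{k-1/2}\sum_{j=1}^k\phi(h_j)$ with $h_j=(\overline\varphi_t^{\otimes(k-1)}O^{(k)}\varphi_t^{\otimes k})_j$; and the $m\ge 2$ contributions are monomials of order $N^{k-m/2}$. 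After the prefactor $N^{-k+1/2}$ the leading piece is $\Phi:=\sum_{j=1}^k\phi(h_j)$, of order one, while $R:=N^{-k+1/2}\widehat O_{N,t}-\Phi$ collects the terms of order $N^{-(m-1)/2}$, $m=2,\dots,2k$.

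To compare the two exponentials I would use the Duhamel identity
\[
e^{iN^{-k+1/2}\widehat O_{N,t}}-e^{i\Phi}=i\int_0^1 e^{isN^{-k+1/2}\widehat O_{N,t}}\,R\,e^{i(1-s)\Phi}\,ds,
\]
so that only $R$ must be estimated inside the matrix element, the surrounding unitaries being harmless. Since $R$ is a sum of monomials in $a^\#$ with kernels built from $O^{(k)}$ and $\varphi_t$, its matrix elements on $\cU_2(t)\Omega$ are controlled by the moment bounds $\|(\cN+1)^p\cU_2(t)\Omega\|\le C_p\,e^{C|t|}$ for the Bogoliubov dynamics together with $\|h_j\|\le\|O^{(k)}\|_{\op}$. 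Summing the contributions $m=2,\dots,2k$ and tracking the $\tbinom{2k}{m}$-type combinatorics of the contractions yields precisely $\sum_{j=1}^{2k-1}N^{-j/2}\big(1+\sum_{i,j}\|O^{(k)}\|_{\op}^{2}\big)^{(j+1)/2}$, with the overall factor $e^{C|t|}\|O^{(k)}\|_{\op}$.

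Finally I would evaluate the leading term. The Bogoliubov dynamics acts linearly on field operators, $\cU_2(t)^*\,\phi(h_j)\,\cU_2(t)=\phi(f_{0;t}^{(j)})$, where $f_{0;t}^{(j)}$ solves \eqref{def:f} backwards in time from the datum $f_{t;t}^{(j)}=q_t h_j$ (the projection $q_t$ reflecting that the condensate mode is gauged out of the fluctuation dynamics); hence
\[
\expval{e^{i\Phi}}{\cU_2(t)\Omega}=\expval{e^{i\sum_{j}\phi(f_{0;t}^{(j)})}}{\Omega}=e^{-\frac12\big\|\sum_{j}f_{0;t}^{(j)}\big\|^2}=e^{-\sigma_t^2/2},
\]
the last step being the definition \eqref{eq:cov}. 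The \emph{main obstacle} is the remainder control above: bounding the unbounded monomials comprising $R$ on the Bogoliubov-evolved vacuum, uniformly in the interpolation parameter $s$ and with the correct powers of $\|O^{(k)}\|_{\op}$ and $N$, is the technical core of the argument, and it is here that the growth $e^{C|t|}$ enters, through the number-operator moments of $\cU_2(t)$.
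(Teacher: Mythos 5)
Your proposal matches the paper's proof essentially step for step: you use the norm approximation \eqref{eq:norm-approx} to pass to the Bogoliubov-evolved vacuum (the paper's Lemma \ref{lemma:step1}), then the Weyl-shift expansion in which the centering cancels the $O(N^k)$ term and the linear-in-$a^\#$ term gives $\phi(h_t)$, with a Duhamel interpolation whose remainder is controlled by number-operator moment bounds on $\cU_\infty(t;0)\Omega$ (Lemma \ref{lemma:step2}), and finally the Bogoliubov action $\cU_\infty^*(t;0)\phi(h_t)\cU_\infty(t;0)=\phi(f_{0;t})$ together with the Baker--Campbell--Hausdorff/Weyl identity on the vacuum yielding $e^{-\sigma_t^2/2}$ (Lemma \ref{lemma:step3}). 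The only cosmetic difference is that you conjugate the observable by the Weyl operator before invoking the norm approximation, whereas the paper replaces the state $\psi_{N,t}$ by $\xi_{N,t}=W(\sqrt{N}\varphi_t)\cU_\infty(t;0)\Omega$ first; the error accounting, the power counting $m=2,\dots,2k$ giving $N^{-j/2}$ for $j=1,\dots,2k-1$, and the identification of the variance via the datum $q_t h_j$ in \eqref{def:f} all agree with the paper.
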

In the following, we will now first turn to the proof of Theorem \ref{thm:lln} in Section \ref{sec:lln}, then prove Theorem \ref{thm:CLT} from Theorem \ref{thm:char} in Section \ref{sec:CLT} and finally prove Theorem \ref{thm:char} in Section \ref{sec:char}.

\section{Proof of Theorem \ref{thm:lln}} 
\label{sec:lln}

We  generalize ideas from \cite{BKS} on a law of large numbers for bounded one-particle observables to the case of $k$-particle operators. 

 \begin{proof}
By Chebycheff's inequality, we have 
  \begin{align}
  \label{eq:LLN1}
 \bP_{\psi_{N}} \left[ \left\vert \tfrac{1}{\binom{N}{k}}   O^{(k)}_N \right\vert > \delta \right] \leq & \frac{1}{ \binom{N}{k}^2 \delta^2} \bE_{\psi_N} \left[ \big\vert \sum_{\underline{i}_{k} \in \cI_N^{(k)}} \widetilde{O}^{(k)}_{\underline{i}_k} \big\vert^2  \right] 
\end{align}
where we used the notation $\widetilde{O}^{(k)}$ defined in \eqref{eq:def-Otilde}.  Furthermore, we denote with $\sharp \lbrace \underline{i}_k, \underline{j}_k\rbrace$ the number of elements of $\underline{i}_k$ agreeing with $\underline{j}_k$. Then, we can write 
\begin{align}
\big\vert \sum_{\underline{i}_k \in \cI_N^{(k)}} \widetilde{O}^{(k)}_{\underline{i}_k } \big\vert^2  =\sum_{\ell =0}^k \;\;\; \sum_{ \substack{\underline{i}_k, \underline{j}_k \in \cI_N^{(k)} \\ \sharp \lbrace \underline{i}_k, \underline{j}_k\rbrace = \ell } }  \widetilde{O}^{(k)}_{\underline{i}_k} \widetilde{O}^{(k)}_{\underline{j}_k} 
\end{align}
we can express the r.h.s. of \eqref{eq:LLN1} in terms of $j$-particle reduced density matrices defined in \eqref{eq:red-dens-k} and find 
\begin{align}
\bE_{\psi_N} \left[ \big\vert \sum_{i_1, \dots , i_k \in I_N} \widetilde{O}^{(k)}_{\lbrace i_1, \dots , i_k \rbrace } \big\vert^2  \right] =&   \sum_{\ell=0}^k   \;\;\; \sum_{ \substack{\underline{i}_k, \underline{j}_k \in \cI_N^{(k)} \\ \sharp \lbrace \underline{i}_k, \underline{j}_k\rbrace = \ell } } \tr \gamma_{\psi_N}^{(2k-\ell)}  \widetilde{O}^{(k)}_{\underline{i}_k} \widetilde{O}^{(k)}_{\underline{j}_k} \label{eq:LLN2}
\end{align}

Plugging \eqref{eq:LLN2} into the r.h.s. of \eqref{eq:LLN1}, we find 
 \begin{align}
 \bP_{\psi_{N}} \left[ \left\vert   O^{(k)}_N \right\vert > \delta \right] \leq & \frac{1}{\tbinom{N}{k}^2 \delta^2}\sum_{\ell=0}^k \;\;\;  \sum_{ \substack{\underline{i}_k, \underline{j}_k \in \cI_N^{(k)} \\ \sharp \lbrace \underline{i}_k, \underline{j}_k\rbrace = \ell } } \tr \gamma_{\psi_N}^{(2k-\ell)} \left( \widetilde{O}^{(k)}_{\underline{i}_k} \widetilde{O}^{(k)}_{\underline{j}_k} \right) \label{eq:LLN3}
\end{align}
For $\ell =0$, the term of the sum of the r.h.s. of \eqref{eq:LLN3} is given by 
\begin{align}
 \frac{1}{\tbinom{N}{k}^2 \delta^2}   \sum_{ \substack{\underline{i}_k, \underline{j}_k \in \cI_N^{(k)} \\ \sharp \lbrace \underline{i}_k, \underline{j}_k\rbrace = 0 } } \tr \gamma_{\psi_N}^{(2k}  \widetilde{O}^{(k)}_{\underline{i}_k} \widetilde{O}^{(k)}_{\underline{j}_k}   =   \frac{\tbinom{N}{2k}}{\tbinom{N}{k}^2 \delta^2}   \tr \gamma_{\psi_N}^{(2k)}  \left( \widetilde{O}^{(k)} \otimes \widetilde{O}^{(k)} \right)
 \leq C_k  \tr \gamma_{\psi_N}^{(2k)} \left( \widetilde{O}^{(k)} \otimes \widetilde{O}^{(k)}\right) . 
\end{align}
Since $\psi_{N}$ exhibits Bose-Einstein condensation, it follows by assumption \eqref{ass:lln} 
\begin{align}
   \tr \gamma_{\psi_N}^{(2k)} \left(  \widetilde{O}^{(k)} \otimes \widetilde{O}^{(k)} \right)^{(2k)} \rightarrow    \tr  \ket{\varphi_t} \ket{\varphi_t}^{\otimes (2k)} \left(  \widetilde{O}^{(k)} \otimes \widetilde{O}^{(k)} \right)^{(2k)} \quad \text{as} \quad N \rightarrow  \infty \label{eq:LLN4}
\end{align}
and by definition \eqref{eq:def-Otilde} of $\widetilde{O}^{(k)}$, we arrive at
\begin{align}
  \tr  \ket{\varphi} \ket{\varphi}^{\otimes (2k)} \left(  \widetilde{O}^{(k)} \otimes \widetilde{O}^{(k)} \right)^{(2k)} =0 .\label{eq:LLN5}
\end{align}
For $\ell \geq 1$, the terms of the sum of the r.h.s. of \eqref{eq:LLN3} consists of $(2k-\ell)$-particle operators whose expectation values are computed with $(2k-\ell)$-particle operators.  In particular, we find 
\begin{align}
\frac{1}{\tbinom{N}{k}^2 \delta^2}\sum_{\ell=1}^k  \;\;\; \sum_{ \substack{\underline{i}_k, \underline{j}_k \in \cI_N^{(k)} \\ \sharp \lbrace \underline{i}_k, \underline{j}_k\rbrace = \ell } } \tr \gamma_{\psi_N}^{(2k-\ell)} \left( \widetilde{O}^{(k)}_{\underline{i}_k} \widetilde{O}^{(k)}_{\underline{j}_k} \right) \leq C_k \| O^{(k)}\|_{\rm op}^2 \sum_{\ell=1}^k  \frac{\binom{N}{2k-\ell}}{\binom{N}{k}^2} \leq C_k \| O^{(k)}\|_{\rm op}^2 \sum_{\ell=1}^k \frac{1}{N^{\ell}}\rightarrow 0  \label{eq:LLN6}
\end{align}
as $N\rightarrow \infty$. 

We conclude with \eqref{eq:LLN5}, \eqref{eq:LLN6} and \eqref{eq:LLN3} by 
\begin{align}
 \bP_{\psi_{N}} \left[ \left\vert  \tfrac{1}{\binom{N}{k}} O^{(k)}_N \right\vert > \delta \right] \rightarrow 0 \quad \text{as} \quad N \rightarrow \infty .
\end{align}
\end{proof}

\section{Proof of Theorem \ref{thm:CLT}}
\label{sec:CLT}

We use standard arguments from probability theory to prove Theorem \ref{thm:CLT} from Theorem \ref{thm:char}.  We follow the arguments from \cite[Corollary 1.2 ]{BSS}. 

\begin{proof}
We consider the difference 
\begin{align}
\bP_{\psi_{N,t}} & \left[ N^{-k+1/2} O_{N,t}^{(k)} \in \left[a, b\right] \right] - \bP \left[ \cG_{t} \in \left[ a,b \right] \right]  \notag\\
&= \expval{\chi_{\left[ a, b \right]} \left( N^{-k+1/2} O_{N,t}^{(k)}\right)}{\psi_{N,t}} - \frac{1}{\sqrt{2\pi} \sigma_t}\int dx \;   e^{-\frac{x^2}{2 \sigma_t^2} }   \chi_{\left[ a,b \right]  } (x) \notag\\
&= \bE_{\psi_{N,t}} \left[ \chi_{[a,b]}\left(  N^{-k+1/2}  O_{N,t}^{(k)} \right) \right] - \bE   \left[  \chi_{[a,b]} \left( \cG_t \right)  \right] \label{eq:CLT1}
\end{align}
where $\chi_{[a,b]}$ denotes the characteristic function of the set $[ a,b]$.   We observe that for $g \in  L^1 ( \bR)$ with Fourier transform $\widehat{g} \in L^1\left( \bR, (1+  s^{2k}) \; ds \right)$,  we have on the one hand 
\begin{align}
\bE_{\psi_{N,t}} \left[ g\left(  N^{-k+1/2}  O_{N,t}^{(k)} \right) \right] =& \expval{g\left( N^{-k+1/2}O_{N,t}^{(k)} \right) }{\psi_{N,t}} \notag\\
 =&  \int d\tau \;  \widehat{g}(\tau) \expval{e^{i \tau  N^{-k+1/2} O_{N,t}^{(k)}}}{\psi_{N,t}} 
\end{align}
and on the other hand 
\begin{align}
\bE  \left[  g \left( \cG_t \right)   \right]  = \frac{1}{\sqrt{2\pi \sigma^2}} \int dx \;  g (x) \;  e^{- \frac{x^2}{2 \sigma_t^2}} =  \int d\tau \;  \widehat{g} (\tau) \; e^{- \frac{\tau^2 \sigma_t^2}{2}} \; . 
\end{align}
and,  in particular by Theorem \ref{thm:char} 
\begin{align}
\big\vert\bE_{\psi_{N,t}} &  \left[ g\left(  N^{-k+1/2}  O_{N,t}^{(k)} \right) \right] -  \bE  \left[  g \left( \cG_t \right) \right]  \big\vert \notag\\
& \leq C_k e^{C \vert t \vert } \| O^{(k)} \|_{\rm op}  \int d\tau \; \vert \widehat{g} (\tau \vert \; \sum_{\ell =1}^{2k - 1} N^{-j/2} \left( 1 + \tau^2 \| O^{(k)} \|_{\rm op}^2 \right)^{(j+1)/2} .  \label{eq:CLT2}
\end{align}
Thus, in order to find an estimate for \eqref{eq:CLT1}, we shall find an approximation from above $f_{+, \varepsilon}$ and from below $f_{-, \varepsilon}$ of the characteristic function $\chi_{[a,b]}$ which satisfy $f_{-,\varepsilon}, f_{+,\varepsilon} \in L^1( \bR^3)$ and $\widehat{f}_{-,\varepsilon}, \widehat{f}_{+,\varepsilon} \in L^1( \bR, (1+s^{2k})ds)$.  For this, let $\eta \in C_0^\infty ( \bR )$ with $\eta \geq 0$, $\eta (s) =0 $ for all $\vert s \vert \geq 1$ and $\int ds \; \eta (s) =1$. Furthermore, for $\varepsilon >0$, let $\eta_{\varepsilon} (s) = \varepsilon^{-1} \eta (s/ \varepsilon )$. Then, for any $\varepsilon >0$, we define 
\begin{align}
f_{-,\varepsilon} := \chi_{[a+ \varepsilon,b-\varepsilon]} * \eta_\varepsilon, \quad \text{resp.} \quad f_{+,\varepsilon} := \chi_{[a- \varepsilon,b+\varepsilon]} * \eta_\varepsilon
\end{align}
which satisfy 
\begin{align}
\label{eq:CLT3}
f_{-,\varepsilon} \leq \chi_{[a,b]} \leq f_{+,\varepsilon} \; .
\end{align}
Moreover,  the Fourier transform is given by  
\begin{align}
\label{eq:CLT4}
\widehat{f}_{-,\varepsilon} (\tau)  = -i \tau^{-1} \left( e^{i\tau (b- \varepsilon)} - e^{i\tau (a + \varepsilon)} \right)  \widehat{\eta} ( \varepsilon \tau ) 
\end{align}
Thus it follows from \eqref{eq:CLT1}, \eqref{eq:CLT3} 
\begin{align}
 \bP_{\psi_{N,t}} & \left[ N^{-k+1/2} O_{N,t}^{(k)} \in [a,b]\right] - \bP \left[ \cG_t \in [a,b]\right] \notag\\
 &\geq  - \big\vert \bE \left[  f_{-,\varepsilon} \left(  \cG_t \right) \right] - \bE_{\psi_{N,t}}\left[  f_{-,\varepsilon} \left(  N^{-k+1/2} O_N^{(k)} \right) \right]\big\vert  - \big\vert  \bE \left[  f_{-,\varepsilon} \left(  \cG_t \right) \right] - \bE \left[ \chi_{[a, b]} \left(\cG_t \right)\right]\big\vert 
\end{align}
and with \eqref{eq:CLT2}, \eqref{eq:CLT4} we arrive at 
\begin{align}
 \bP_{\psi_{N,t}} & \left[ N^{-k+1/2} O_{N,t}^{(k)} \in [a,b]\right] - \bP \left[ \cG_t \in [a,b]\right] \notag\\
 &\geq - C_k e^{C \vert t \vert }\sum_{\ell =1}^{2k-1} N^{-j}\left( \vert a - b \vert \varepsilon^{-1} + \varepsilon^{-2} \right)^{(j+1)/2)} - C \varepsilon  \;. 
\end{align}
Similarly, using $f_{+,\varepsilon}$ we have 
\begin{align}
 \bP_{\psi_{N,t}} & \left[ N^{-k+1/2} O_{N,t}^{(k)} \in [a,b]\right] - \bP \left[ \cG_t \in [a,b]\right] \notag\\
 &\leq C_k e^{C \vert t \vert }\sum_{\ell =1}^{2k-1} N^{-j}\left( \vert a - b \vert \varepsilon^{-1} + \varepsilon^{-2} \right)^{(j+1)/2)} + C \varepsilon 
\end{align}
Now, we optimize with respect to $\varepsilon>0$ and arrive at \eqref{eq:CLT-thm}. 
\end{proof}

\section{Proof of Theorem \ref{thm:char}}
\label{sec:char}

\subsection{Fluctuations around the Hartree dynamics. }  In the following, we consider the bosonic $N$-particle wave function $\psi_{N,t}$ as an element of the bosonic Fock space $\cF = \bigoplus_{n \geq 0} L^2 (\bR^3)^{\otimes_s^n}$ with creation and annihilation operators $a^*(f),a(f)$ for $f \in L^2( \bR^3)$.  Theorem \ref{thm:char} characterizes the fluctuations around the Hartree dynamics which are well described by the approximation of the many-body time evolution \cite[Theorem 4.1]{BPS_book} resp. \cite[Proposition 3.3]{BSS} in $L^2( \bR^{3N})$-norm
\begin{align}
\label{eq:norm-approx}
\|  \psi_{N,t} - W ( \sqrt{N} \varphi_t ) \cU_{\infty} (t;0) \Omega \| \leq C \vert t \vert \; N^{-1/2} \; 
\end{align}
where the limiting dynamics $\cU_{\infty} (t;0)$ is given by 
\begin{align}
\label{eq:Uinfty}
i\partial_t \cU_\infty (t;0) = \cL (t) \; \cU_\infty (t;0), \quad \cU_\infty(0;0) = 1
\end{align}
with generator 
\begin{align}
\cL(t)= d\Gamma \left( h_{\rm H} (t) + K_{1,t} \right) + \int dxdy \left( K_{2,t} (x;y) a_x^*a_y^* + \overline{K}_{2,t} (x;y) a_x a_y \right) 
\end{align}
Here,  $d\Gamma (A) = \int dxdy \; A(x;y) a_x^* a_y$ denotes the second quantization  of an operator $A$  on $L^2( \bR^3)$,  $h_H (t)$ the Hartree Hamiltonian defined in \eqref{eq:hartree},  and $K_{j,t}$ denote the operators defined in \eqref{eq:def-K}.  For further references, see also \cite{ChenX,Hepp,GM,LNS}. The generator $\cL_\infty(t)$ is quadratic in creation and annihilation operators, and thus \cite[Theorem 2.2]{BKS} (see also \cite{Lea,R} gives rise to a Bogoliubov dynamics, i.e. there exists bounded operators $U(t;0), V(t;0)$ on $L^2( \bR^3)$ such that for $f,g \in L^2( \bR^3)$ and the operator $A(f,g) = a(f) + a^*(\overline{g})$, we have 
\begin{align}
\label{eq:bogo}
\cU_\infty^* (t;0)A(f,g) \cU_\infty (t;0) = A( \Theta (t;0) (f,g)),\quad \text{with} \quad \Theta(t;0) = \begin{pmatrix}
U(t;0) &  J V(t;0) J \\
V(t;0) & J U(t;0) J  \\
\end{pmatrix}
\end{align}
where $Jf = \overline{f}$ for any $f \in L^2( \bR^3)$. In particular,  for the operator 
\begin{align}
\label{eq:phi}
\phi(f) = a(f) + a^*(f), \quad \text{for} \quad f \in L^2 (\bR^3)
\end{align}
we have from \eqref{eq:bogo} 
\begin{align}
\label{eq:bogo2}
\cU_\infty^* (t;0)\phi(f) \cU_\infty (t;0) = \phi (  (U(t;0)  + JV(t;0) )f) \; 
\end{align}
and it follows from \cite[ Theorem 2.2 and subsequent Remark]{BKS},  that  
\begin{align}
\label{eq:bogodyn}
i \partial_s \left( U(t;s) + JV(t;s) \right) f=\left( h_H(s) + K_{1,s} - K_{2,s} J \right) \left( U(t;s) + JV(t;s) \right) f \; . 
 \end{align}
Comparing with \eqref{def:f}, we notice that the variance $\sigma_t$ defined in \eqref{eq:cov} is determined by the limiting Bogoliubov dynamics \eqref{eq:Uinfty}, i.e. the fluctuations' quasi-free approximation. 

\subsection{Proof of Theorem \ref{thm:char} }  \label{sec:proof} The proof \ref{thm:char} is split into three steps covered by Lemma \ref{lemma:step1} to Lemma \ref{lemma:step3}. 

For the first step, Lemma \ref{lemma:step1},  we use similarly to the strategy in \cite{R,RS_CLT}, directly the norm approximation \eqref{eq:norm-approx}. This allows to consider more general $k$- (resp. one-) particle operators than in \cite{BKS,BSS} where the difference of the limiting fluctuation dynamics $\cU_\infty (t;0)$ defined in \eqref{eq:Uinfty} to the full many-body dynamics was estimated in \eqref{eq:step11} by Duhamel's formula and a Gronwall estimate. The remaining steps use similar ideas as in \cite{BKS,BSS}. 

\begin{lemma}
\label{lemma:step1}
Under the same assumptions as in Theorem \ref{thm:CLT},  let 
\begin{align}
\xi_{N,t} = W(\sqrt{N} \varphi_t) \cU_\infty (t;0) \Omega \label{eq:def-xi}
\end{align}
Then, there exists $C>0$ such that 
\begin{align}
\Big\vert  \bE_{\psi_{N,t}} & \left[ e^{ iN^{-k+1/2} O^{(k)}_N} \right] -  \bE_{\xi_{N,t}} \left[ e^{ iN^{-k+1/2} O^{(k)}_{N,t}} \right] \Big\vert \leq C \vert t \vert N^{-1/2}  \; . 
\end{align} 
\end{lemma}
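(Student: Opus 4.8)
The plan is to treat $U_{N,t} := e^{iN^{-k+1/2} O_{N,t}^{(k)}}$ as a single unitary observable and to transfer it from the true many-body state $\psi_{N,t}$ to its quasi-free approximation $\xi_{N,t}$ by a Cauchy--Schwarz estimate resting on the norm approximation \eqref{eq:norm-approx}; here the observable is understood to be the same on both sides (the operator $O_{N,t}^{(k)}$ centred with respect to $\varphi_t$), which I abbreviate by $U_{N,t}$. Since $O^{(k)}$ is self-adjoint and bounded, each $O_{\underline{i}_k}^{(k)}$ is self-adjoint, and subtracting the real constant $\expval{O^{(k)}}{\varphi_t^{\otimes k}}$ preserves self-adjointness; hence $O_{N,t}^{(k)}$ is self-adjoint and $U_{N,t}$ is unitary with $\| U_{N,t}\|_{\op} = 1$. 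I emphasise that this step uses no structural property of $O_{N,t}^{(k)}$ beyond self-adjointness, which is precisely what permits more general $k$-particle observables than in \cite{BKS,BSS}.

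First I would rewrite the two expectations as inner products and telescope through the mixed term $\langle \xi_{N,t}, U_{N,t}\psi_{N,t}\rangle$:
\begin{align}
\bE_{\psi_{N,t}}\left[ U_{N,t}\right] - \bE_{\xi_{N,t}}\left[ U_{N,t}\right]
= \langle \psi_{N,t} - \xi_{N,t}, U_{N,t}\psi_{N,t}\rangle + \langle \xi_{N,t}, U_{N,t}(\psi_{N,t} - \xi_{N,t})\rangle . \notag
\end{align}
Applying Cauchy--Schwarz to each summand and using $\| U_{N,t}\|_{\op} = 1$ together with the normalisations $\| \psi_{N,t}\| = \|\xi_{N,t}\| = 1$ — the latter because $W(\sqrt{N}\varphi_t)$ and $\cU_\infty(t;0)$ are unitary and $\Omega$ is a unit vector — bounds the left-hand side by $2\,\| \psi_{N,t} - \xi_{N,t}\|$.

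Finally I would insert the norm approximation \eqref{eq:norm-approx}, namely $\| \psi_{N,t} - \xi_{N,t}\| = \| \psi_{N,t} - W(\sqrt{N}\varphi_t)\cU_\infty(t;0)\Omega\| \leq C\vert t \vert N^{-1/2}$, yielding the claim with constant $2C$. There is essentially no genuine obstacle: the estimate is a routine stability argument, and the only points requiring minor care are the self-adjointness of $O_{N,t}^{(k)}$, which guarantees that the observable is a contraction, and the exact normalisation of both states, so that no additional error enters. The substantive work — extracting the limiting Gaussian and computing its variance $\sigma_t^2$ — is deferred to the subsequent steps, culminating in Lemma \ref{lemma:step3}, where the observable is evaluated in the explicitly quasi-free state $\xi_{N,t}$ and the Bogoliubov dynamics \eqref{eq:bogo2}–\eqref{eq:bogodyn} enters.
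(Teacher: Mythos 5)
Your proposal is correct and follows essentially the same route as the paper: telescope the difference of expectations through a mixed inner product, bound each term by $\|\psi_{N,t}-\xi_{N,t}\|$ using unitarity of $e^{iN^{-k+1/2}O^{(k)}_{N,t}}$ (from self-adjointness of $O^{(k)}_{N,t}$) and the normalisation of the states, then invoke the norm approximation \eqref{eq:norm-approx}. If anything, your write-up is slightly more careful than the paper's, which conflates the identity and its absolute-value bound in a single display.
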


\begin{proof} We have 
\begin{align}
 \bE_{\psi_{N,t}} & \left[ e^{ iN^{-k+1/2} O^{(k)}_{N,t}} \right] -  \bE_{\xi_{N,t}} \left[ e^{ iN^{-k+1/2} O_{N,t}^{(k)}} \right] \notag\\
=&    \vert  \bra{\psi_{N,t}}  e^{i \tau  N^{-k+1/2}  \widetilde{O}^{(k)}_{N,t}}  \ket{\psi_{N,t} - \xi_{N,t} } \vert + \vert  \bra{\psi_{N,t} - \xi_{N,t}}  e^{i \tau  N^{-k+1/2}   \widetilde{O}_{N,t}^{(k)}) } \ket{ \xi_{N,t} } \vert  \; .\label{eq:step11}
\end{align}
The operator $O^{(k)}$ is a self-adjoint operator,  thus $\| e^{i N^{-k+1/2} O^{k}_N} \|_{\rm op} \leq 1 $ and we find with \eqref{eq:def-xi} and  \eqref{eq:norm-approx} 
\begin{align}
\Big\vert  \bE_{\psi_{N,t}} & \left[ e^{ iN^{-k+1/2} O^{(k)}_{N,t}} \right] -  \bE_{\xi_{N,t}} \left[ e^{ iN^{-k+1/2} O^{(k)}_{N,t}} \right] \Big\vert \leq C \vert t \vert N^{-1/2} 
\end{align}
\end{proof} 

\begin{lemma}
\label{lemma:step2}
Under the same assumptions as in Theorem \ref{thm:CLT},  let $\phi (f)$ be defined as in \eqref{eq:phi} and $h_t = \sum_{j=1}^k h_{j,t} \in L^2( \bR^3)$ defined with \eqref{eq:def-h} by 
\begin{align}
\label{eq:defh2}
h_{j,t} = \left( \overline{\varphi}_t^{\otimes (k-1)} O^{(k)} \varphi_t^{\otimes k}\right)_j
\end{align}
Then, there exists $C>0$ such that 
\begin{align}
\left\vert \bE_{\xi_{N,t}}  \left[ e^{ i N^{-k+1/2} O_{N,t}^{(k)}} \right] - \bE_{\cU_\infty (t;0) \Omega} \left[ e^{i \phi (h_t)} \right] \right\vert  \leq C_k \; e^{C \vert t \vert }  \| O^{(k)}\|_{\rm op} \sum_{j=1}^{2k-1} N^{-j/2} \left( 1 + \| O^{(k)} \|_{\rm op}^2 \right)^{(j+1)/2} \; . 
\end{align}
\end{lemma}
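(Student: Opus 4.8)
The plan is to evaluate the expectation in the coherent-shifted picture and then expand the Weyl-conjugated $k$-particle operator according to how many field operators survive the shift. First I would write $\bE_{\xi_{N,t}}[\,\cdot\,]=\langle \cU_\infty(t;0)\Omega,\,W^*(\sqrt N\varphi_t)(\,\cdot\,)W(\sqrt N\varphi_t)\,\cU_\infty(t;0)\Omega\rangle$ and use the shift relation $W^*(\sqrt N\varphi_t)\,a_x\,W(\sqrt N\varphi_t)=a_x+\sqrt N\,\varphi_t(x)$ in the second-quantised form of $O_{N,t}^{(k)}$, namely the integral against the kernel of $O^{(k)}$ dressed with $k$ creation and $k$ annihilation operators. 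Conjugation produces a sum of terms indexed by the number $m\in\{0,1,\dots,2k\}$ of field operators \emph{not} replaced by the c-number $\sqrt N\varphi_t$; a term with $m$ surviving fields carries the explicit factor $N^{(2k-m)/2}$, so after the prefactor $N^{-k+1/2}$ it scales as $N^{-(m-1)/2}$. The $m=0$ term is a c-number whose leading $N^k$ part is removed by the centering in $O_{N,t}^{(k)}$, leaving a residual of size $O(N^{-1/2}\|O^{(k)}\|_{\op})$ after scaling. For $m=1$, performing the $2k-1$ c-number contractions and reading the free variable against the kernel and the $\varphi_t$-factors reproduces, through \eqref{eq:def-h} and \eqref{eq:defh2}, the functions $h_{j,t}$: the surviving creation operators assemble $a^*(h_t)$, the surviving annihilation operators assemble $a(h_t)$, and using the self-adjointness of $O^{(k)}$ these combine to $\phi(h_t)$ with $h_t=\sum_{j=1}^k h_{j,t}$. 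Hence I can write
\[
N^{-k+1/2}\,W^*(\sqrt N\varphi_t)\,O_{N,t}^{(k)}\,W(\sqrt N\varphi_t)=\phi(h_t)+R_N,\qquad R_N=\sum_{m=2}^{2k}R_N^{(m)},
\]
where each $R_N^{(m)}$ is a smeared monomial in $m$ creation/annihilation operators carrying the prefactor $N^{-(m-1)/2}$ and a single kernel factor bounded by $\|O^{(k)}\|_{\op}$ (the $m=0$ residual being collected into $R_N$ as well). Since the left-hand side and $\phi(h_t)$ are self-adjoint, so is $R_N$, and both $e^{i\phi(h_t)}$ and $e^{i(\phi(h_t)+R_N)}$ are unitary.

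To compare the two exponentials I would use the Duhamel-type identity
\[
e^{i(\phi(h_t)+R_N)}-e^{i\phi(h_t)}=i\int_0^1 e^{is(\phi(h_t)+R_N)}\,R_N\,e^{i(1-s)\phi(h_t)}\,ds,
\]
which reduces the claim to bounding $\sup_{0\le s\le1}|\langle \Phi_s,\,R_N\,\Psi_s\rangle|$ for states $\Phi_s,\Psi_s$ of the form $e^{i(\cdots)}\cU_\infty(t;0)\Omega$. For each $R_N^{(m)}$ I would distribute its $m$ field operators onto the two states and apply the standard estimates $\|a^{\#}(f_1)\cdots a^{\#}(f_r)\Psi\|\le\prod_i\|f_i\|_2\,\|(\cN+1)^{r/2}\Psi\|$, with the smearing functions built from $\varphi_t$ (bounded in $L^2$) and the kernel contributing the single factor $\|O^{(k)}\|_{\op}$. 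This turns the whole estimate into number-operator moments $\langle(\cN+1)^p\rangle$ in the states $\Phi_s,\Psi_s$.

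It remains to bound these moments. Since $\cU_\infty(t;0)$ is the Bogoliubov transformation generated by the quadratic $\cL(t)$ in \eqref{eq:Uinfty}, the state $\cU_\infty(t;0)\Omega$ is quasi-free with number-operator moments bounded uniformly in $N$, $\langle(\cN+1)^p\rangle_{\cU_\infty(t;0)\Omega}\le C_p\,e^{C|t|}$, which is the origin of the factor $e^{C|t|}$. Moreover $e^{is\phi(h_t)}$ is, up to a phase, the Weyl operator $W(is\,h_t)$, so conjugating $\cN$ through it shifts the number operator by a controlled amount and gives $\langle(\cN+1)^p\rangle_{\Phi_s}\le C_p\,e^{C|t|}(1+\|h_t\|_2^2)^p$; together with $\|h_t\|_2\le C_k\|O^{(k)}\|_{\op}$ this supplies the factors $(1+\|O^{(k)}\|_{\op}^2)^{(j+1)/2}$. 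Collecting the bound for $R_N^{(m)}$ at order $N^{-(m-1)/2}$ with its single kernel factor and setting $j=m-1$ yields exactly the stated sum $\sum_{j=1}^{2k-1}N^{-j/2}(1+\|O^{(k)}\|_{\op}^2)^{(j+1)/2}$. The main obstacle is precisely this last step: because $R_N$ is unbounded it cannot enter through its operator norm, so everything must be routed through the moment bounds above, and the effect of the Weyl shift on these moments has to be tracked carefully so that the powers of $N$ and of $\|O^{(k)}\|_{\op}$ match the claimed expression.
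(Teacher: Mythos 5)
Your proposal follows the same skeleton as the paper's proof: pass to Fock space, conjugate the second-quantised observable with $W(\sqrt{N}\varphi_t)$, note that the $O(N^k)$ contribution is removed by the centering and that the linear-in-fields contribution is exactly $\phi(h_t)$, then compare $e^{i(\phi(h_t)+R_N)}$ with $e^{i\phi(h_t)}$ by a Duhamel integral and reduce everything to number-operator moments, with the Bogoliubov dynamics $\cU_\infty(t;0)$ supplying the factor $e^{C\vert t\vert}$. One minor inaccuracy in the expansion: the $m=0$ term does not leave a residual of size $O(N^{-1/2}\Vert O^{(k)}\Vert_{\rm op})$; it vanishes identically, being $N^{1/2}\langle \varphi_t^{\otimes k}, \widetilde{O}^{(k)}\varphi_t^{\otimes k}\rangle = 0$ by the definition of the centred observable. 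This is harmless since you absorb it into $R_N$ anyway.

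The genuine gap is in the final estimation step. In your Duhamel identity the left exponential is $e^{is(\phi(h_t)+R_N)}$, so the left state in the inner product is $\Phi_s = e^{-is(\phi(h_t)+R_N)}\,\cU_\infty(t;0)\Omega$. Your plan to ``distribute the $m$ field operators of $R_N^{(m)}$ onto the two states'' then requires bounds on $\Vert (\cN+1)^{p}\Phi_s\Vert$. But the two moment bounds you invoke --- quasi-freeness of $\cU_\infty(t;0)\Omega$ and the Weyl-shift bound for $e^{is\phi(h_t)}$ --- do not apply to $\Phi_s$: the exponent there contains $R_N$, a polynomial of degree up to $2k$ in creation and annihilation operators, so $e^{is(\phi(h_t)+R_N)}$ is \emph{not} a Weyl operator, and propagating $(\cN+1)^p$ through this unitary would require a separate Gronwall-type argument that you do not supply (and which is not routine for monomials of degree larger than two). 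The repair is to never ask for moments of $\Phi_s$ at all: by Cauchy--Schwarz and unitarity,
\begin{align}
\left\vert \langle \Phi_s, R_N \Psi_s\rangle \right\vert \leq \Vert R_N \Psi_s \Vert, \qquad \Psi_s = e^{i(1-s)\phi(h_t)}\,\cU_\infty(t;0)\Omega \; ,
\end{align}
so that all field operators land on the right state, whose moments are controlled by the Weyl-shift bound and the Bogoliubov propagation bound. This is exactly what the paper does: it bounds $\Vert \mathcal{R}_N e^{is\phi(h_t)}\cU_\infty(t;0)\Omega\Vert$ via $\Vert \mathcal{R}_N \xi\Vert \leq C_k \Vert O^{(k)}\Vert_{\rm op}\sum_{j} N^{-j/2}\Vert (\cN+1)^{(j+1)/2}\xi\Vert$ and then the two cited moment estimates. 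With this one change your argument closes and reproduces the stated bound.
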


\begin{proof} We observe that on the bosonic Fock space, we have 
\begin{align}
\bE_{\xi_{N,t}} \left[ e^{ i N^{-k+1/2} O_{N,t}^{(k)}} \right] =  \expval{e^{i  N^{-k+1/2} d\Gamma^{(k)} ( \widetilde{O}^{(k)}) }}{\xi_{N,t}}
\end{align}
where we used the second quantization 
\begin{align}
d\Gamma^{(k)} (\widetilde{O}^{(k)}) = \int dx_1 \dots dx_k dy_1 \dots dy_k \; \widetilde{O}^{(k)}(x_1, \dots, x_k; y_1, \dots y_k ) a_{x_1}^* \dots a_{x_k}^*  a_{y_1} \dots a_{y_k}   
\end{align}
of the operator $\widetilde{O}^{(k)}$ defined in \eqref{eq:def-Otilde}
to express the sum $O^{(k)}_{N,t}$ on the Fock space. Thus,  recalling \eqref{eq:def-xi}, we shall estimate the expectation value 
\begin{align}
\bE_{\xi_{N,t}} \left[  e^{ N^{-k+1/2}    O^{(k)}_{N,t} } ) \right]
=& \expval{W^* ( \sqrt{N} \varphi_t )  e^{i \tau  N^{-k+1/2} d \Gamma^{(k)} (\widetilde{O}^{(k)} ) } W( \sqrt{N} \varphi_t ) } {\cU_\infty (t;0) \Omega} \notag\\
=& \bE_{\cU_\infty (t;0) \Omega} \left[W^*(\sqrt{N} \varphi_t ) e^{i \tau N^{-k + 1/2} d \Gamma^{(k)} ( \widetilde{O}^{(k)})}  W(\sqrt{N} \varphi_t ) \right]  \label{eq:CLT2-1}
\end{align}
In order to compute the operator 
\begin{align}
\label{eq:def-ON}
\cO_{N,t} =  N^{-k+1/2} W^* ( \sqrt{N} \varphi_t ) \; d\Gamma^{(k)} ( \widetilde{O}^{(k)} )  W( \sqrt{N} \varphi_t ) 
\end{align}
we use the Weyl operators' shifting properties on creation and annihilation operators, i.e. 
\begin{align}
W^* ( \sqrt{N} \varphi_t ) a_x  W( \sqrt{N} \varphi_t ) = a_x + \sqrt{N} \varphi_t(x), \quad W^* ( \sqrt{N} \varphi_t ) a^*_x  W( \sqrt{N} \varphi_t ) = a_x^* + \sqrt{N}\;  \overline{\varphi}_t(x) \; .
\end{align}
We find 
\begin{align}
\cO_{N,t}
=& N^{-k + 1/2} \int dx_1 \dots dx_k dy_1 \dots dy_k \; \widetilde{O}^{(k)}(x_1, \dots, x_k; y_1, \dots y_k ) \notag\\
& \hspace{0.5cm} \times \left( a_{x_1}^* + \sqrt{N} \; \overline{\varphi}_t(x_1) \right) \dots \left( a_{x_k}^* + \sqrt{N} \; \overline{\varphi}_t(x_k) \right)   \left( a_{y_1} + \sqrt{N}\varphi_t (y_1)  \right) \dots  \left( a_{y_k} + \sqrt{N}\varphi_t (y_k)  \right) 
 \label{eq:CLT2-id1}
\end{align}

We observe that the leading order term $O(N^{k})$  vanishes by definition of $\widetilde{O}^{(k)} $ in \eqref{eq:def-Otilde}.  Thus, the first non-vanishing leading order term is $O(N^{k-1/2})$, and, in particular,  we have with \eqref{eq:phi} and \eqref{eq:defh2}
\begin{align}
\cO_{N,t} =& \int dx_1 \dots dx_k dy_1 \dots dy_k \;  \widetilde{O}^{(k)}(x_1, \dots, x_k; y_1, \dots y_k ) \notag\\
& \hspace{2cm} \times \left( \prod_{m=1}^k \varphi_t (y_m) \sum_{j=1}^k a^*_{x_j} \prod_{\substack{i=1\\ i \not= j}}^k \;  \overline{\varphi}_t (x_i) +  \prod_{m=1}^k \overline{\varphi}_t (x_m) \sum_{j=1}^k a_{y_j} \prod_{\substack{i=1\\ i \not= j}}^k   \varphi_t (y_i) \right) + \mathcal{R}_N  \notag\\
=&  \phi ( h_t) + \mathcal{R}_N  \label{eq:leading-term}
\end{align}
The remainder 
\begin{align}
\mathcal{R}_N = \cO_{N,t} -   \phi ( h)
\end{align} 
is the sum of $(2^k-2k)$ terms. The estimates 
\begin{align}
\| a(f) \xi \| \leq \| f \|_2 \| \cN^{1/2} \xi \|, \quad \| a^*(f) \xi \| \leq \|f \|_2 \| (\mathcal{N}+1)^{1/2} \xi \|
\end{align}
for any $f \in L^2( \bR^3)$ and any Fock space vector $\xi \in \cF$ together with  \eqref{eq:CLT2-id1} yield the upper bound 
\begin{align}
\label{eq:estimate-R}
\| \mathcal{R}_{N} \xi \| \leq C_k \| O^{(k)} \|_{\rm op } \sum_{j=1}^{2k} N^{-j/2} \| \left( \mathcal{N} + 1\right)^{(j+1)/2} \xi \| 
\end{align}
for any $\xi \in \cF$.  We use the fundamental theorem of calculus to write 
\begin{align}
 \bE_{\xi_{N,t}}  & \left[ e^{ i N^{-k+1/2} O_{N,t}^{(k)}} \right] - \bE_{\cU_\infty (t;0) \Omega} \left[ e^{i \phi (h_t)} \right] \notag\\
=& \expval{ e^{i \tau  N^{-k+1/2} \cO_{N,t} }  } {\cU_\infty (t;0) \Omega}  - \expval{e^{i \tau \phi (h_{t})}}{\cU_\infty (t;0) \Omega} \notag \\
&=  - \int_0^1 ds \; \frac{d}{ds} \expval{ e^{i (1 -s) \cO_{N,t}} e^{is \phi(h_t)}}{\cU_\infty (t;0) \Omega} \notag\\
&= \int_0^1 ds \expval{ e^{i (1 -s) \cO_{N,t}} \mathcal{R}_N e^{is \phi(h_t)}}{\cU_\infty (t;0) \Omega}
\end{align}
Then, it follows from \eqref{eq:estimate-R} 
\begin{align}
\big\vert  \bE_{\xi_{N,t}}  & \left[ e^{ i N^{-k+1/2} O_{N,t}^{(k)}} \right] - \bE_{\cU_\infty (t;0) \Omega} \left[ e^{i \phi (h_t)} \right] \notag\\ 
& \leq  \int_0^1  ds \; \| \mathcal{R}_N e^{is \phi (h_t)} \cU_\infty (t;0) \Omega\| \notag\\
&\leq C_k \| O^{(k)}\|_{\rm op} \int_0^1 ds \sum_{j=1}^{2k-1} N^{-j/2} \|  \left( \mathcal{N} + 1 \right)^{(j+1)/2} e^{is \phi (h_t)} \cU_\infty (t;0) \Omega\| \label{eq:CLT2-3}
\end{align}
With \cite[Proposition 3.4]{BSS} and $\| h_t \|_2^2 \leq  \| O^{(k)} \|_{\rm op}^2$ by definition \eqref{eq:def-h}, we have 
\begin{align}
 \| & \left( \mathcal{N} + 1 \right)^{(j+1)/2} e^{is \phi (h_t)} \cU_\infty (t;0) \Omega\|  \leq C \| \left( \mathcal{N} + 1 + s^2 \| O^{(k)} \|_{\rm op}^2 \right)^{(j+1)/2}\cU_\infty (t;0) \Omega\|  
\end{align}
and furthermore, with \cite[Lemma 3.2]{BSS}
\begin{align}
 \| & \left( \mathcal{N} + 1 \right)^{(j+1)/2} e^{is \phi (h_t)} \cU_\infty (t;0) \Omega\|  \notag\\
& \hspace{1cm}\leq C e^{C \vert t \vert } \| \left( \mathcal{N} + 1 + s^2 \| O^{(k)} \|_{\rm op}^2 \right)^{(j+1)/2} \Omega\| \leq  C e^{C \vert t \vert } \big(1 + s^2 \| O^{(k)} \|_{\rm op}^2 \big)^{(j+1)/2} \label{eq:CLT2-4}
\end{align}
We use now the estimate \eqref{eq:CLT2-4} for \eqref{eq:CLT2-3} and arrive at 
\begin{align}
\big\vert  \bE_{\xi_{N,t}}  & \left[ e^{ i N^{-k+1/2} O_{N,t}^{(k)}} \right] - \bE_{\cU_\infty (t;0) \Omega} \left[ e^{i \phi (h_t)} \right] \notag\\  
& \leq C_k \; e^{C \vert t \vert } \;  \| O^{(k)}\|_{\rm op}  \sum_{j=1}^{2k-1} N^{-j/2}\int_0^1 ds \;  \left(1 + s^2 \| O^{(k)} \|_{\rm op}^2 \right)^{(j+1)/2} \notag \\
& \leq C_k \; e^{C \vert t \vert }  \| O^{(k)}\|_{\rm op} \sum_{j=1}^{2k-1} N^{-j/2} \left( 1 +  \| O^{(k)} \|_{\rm op}^2 \right)^{(j+1)/2} ,
\end{align}
which proves the lemma. 
\end{proof} 

\begin{lemma}
\label{lemma:step3}
Under the same assumptions as in Theorem \ref{thm:CLT}, let $f_{t;s} = \sum_{i=1}^k f_{t;s}^{(i)}  \in L^2( \bR^3) $ be given by \eqref{def:f}. Then,  we have 
\begin{align}
\bE_{\cU_\infty (t;0) \Omega} \left[ e^{i \phi (h_t)} \right] = e^{- \| f_{t;0} \|_2^2/2} \; . 
\end{align}
\end{lemma}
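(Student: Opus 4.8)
The plan is to reduce the expectation in the vacuum sector to a Gaussian characteristic function by conjugating with the Bogoliubov evolution, and then to identify the resulting one-particle vector with $f_{t;0}$ through the defining equation \eqref{def:f}. First I would move the exponential inside the dynamics: since $\phi(h_t)$ is self-adjoint and $\cU_\infty(t;0)$ is unitary, functional calculus gives $\cU_\infty^*(t;0) e^{i\phi(h_t)}\cU_\infty(t;0) = e^{\,i\,\cU_\infty^*(t;0)\phi(h_t)\cU_\infty(t;0)}$, and the Bogoliubov relation \eqref{eq:bogo2} turns the exponent into $\phi(g_t)$ with $g_t := (U(t;0)+JV(t;0)) h_t$. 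Hence
\begin{align}
\bE_{\cU_\infty(t;0)\Omega}\left[ e^{i\phi(h_t)}\right] = \expval{e^{i \phi(g_t)}}{\Omega}.
\end{align}

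Next I would compute the vacuum expectation of a single field exponential. Writing $\phi(g_t)=a(g_t)+a^*(g_t)$ and using that $[a(g_t),a^*(g_t)] = \| g_t\|_2^2$ is a scalar, the Baker--Campbell--Hausdorff formula normal-orders the exponential as $e^{i\phi(g_t)} = e^{i a^*(g_t)}\, e^{i a(g_t)}\, e^{-\| g_t\|_2^2/2}$. Since $a(g_t)\Omega = 0$ and $\bra{\Omega} a^*(g_t)=0$, this yields $\expval{e^{i\phi(g_t)}}{\Omega} = e^{-\| g_t\|_2^2/2}$, the standard Gaussian characteristic function of a field operator in the vacuum.

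It remains to prove $\| g_t\|_2 = \| f_{t;0}\|_2$, and this is the one step requiring care. The point is that $g_t$ is transported by exactly the one-particle flow \eqref{eq:bogodyn}, which is the same linear equation as \eqref{def:f}. More precisely, for each $j$ the vector $(U(t;s)+JV(t;s))\, q_t h_{j,t}$ solves $i\partial_s(\cdots) = (h_{\rm H}(s)+K_{1,s}-K_{2,s}J)(\cdots)$ with terminal datum $q_t h_{j,t}$ at $s=t$ (using $\cU_\infty(t;t)=1$, i.e. $U(t;t)=1$, $V(t;t)=0$); by uniqueness of solutions of this linear equation it coincides with $f_{s;t}^{(j)}$, so that $f_{0;t}^{(j)}=(U(t;0)+JV(t;0))\, q_t h_{j,t}$. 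To drop the projection I would use that the centring of $\widetilde O^{(k)}$ forces $\langle \varphi_t, h_{j,t}\rangle = \expval{\widetilde O^{(k)}}{\varphi_t^{\otimes k}} = 0$ for every $j$ — the very cancellation that makes the $O(N^{k})$ term in \eqref{eq:CLT2-id1} vanish — whence $q_t h_{j,t}=h_{j,t}$ and therefore $g_t = \sum_{j=1}^k (U(t;0)+JV(t;0)) h_{j,t} = \sum_{j=1}^k f_{0;t}^{(j)} = f_{t;0}$. Combining the three steps gives $\bE_{\cU_\infty(t;0)\Omega}[e^{i\phi(h_t)}] = e^{-\| f_{t;0}\|_2^2/2}$.

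The main obstacle is precisely this last identification: one must match the unprojected vector $h_t$ appearing in $\phi(h_t)$ with the projected terminal datum $q_t h_{j,t}$ built into \eqref{def:f}, which rests on the orthogonality $h_{j,t}\perp\varphi_t$, and one must invoke uniqueness for the linear evolution \eqref{eq:bogodyn} (coupled through the anti-linear $J$) to conclude that the two flows agree. The reduction to the vacuum Gaussian and the normal-ordering computation are otherwise routine.
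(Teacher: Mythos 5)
Your proof is correct and takes essentially the same route as the paper's: conjugate by the Bogoliubov dynamics via \eqref{eq:bogo2} to reduce the expectation to $\expval{e^{i\phi(f_{0;t})}}{\Omega}$, then normal-order with Baker--Campbell--Hausdorff and use $a(f_{0;t})\Omega=0$ to obtain $e^{-\|f_{0;t}\|_2^2/2}$. Your additional step identifying $(U(t;0)+JV(t;0))h_t$ with $f_{0;t}$ --- invoking uniqueness for the linear flow \eqref{eq:bogodyn} and checking $q_t h_{j,t}=h_{j,t}$ from $\langle \varphi_t, h_{j,t}\rangle = \expval{\widetilde{O}^{(k)}}{\varphi_t^{\otimes k}}=0$ --- is a detail the paper's proof leaves implicit, and you handle it correctly.
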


\begin{proof} We need to compute the expectation value 
\begin{align}
\bE_{\cU_\infty (t;0) \Omega} \left[ e^{i \phi (h_t)} \right] = \expval{e^{i  \phi (h_t)}}{\cU_\infty (t;0) \Omega}  
\end{align}
We recall that the limiting dynamics $\cU_\infty (t;0)$ defined in \eqref{eq:Uinfty} acts as a Bogoliubov transform. In particular in follows from \eqref{eq:bogo}, \eqref{eq:bogodyn} and the notations introduced therein that
\begin{align}
\cU_\infty^* (t;0) \phi( h_t) \cU_\infty (t;0) = \phi( \left[ U(t;0) + JV(t;0) \right] h_t ) =  \phi( f_{0;t} ) \; . 
\end{align}
with $f_{0;t}$ defined in \eqref{def:f}.  Hence, we have 
\begin{align}
\bE_{\cU_\infty (t;0) \Omega} \left[ e^{i \phi (h_t)} \right] =  \expval{e^{i  \cU_\infty (t;0)^* \phi (h_t)\cU_\infty (t;0) }}{ \Omega}  = \expval{e^{i  \phi (f_{0;t})}}{\Omega} 
\end{align}
With the Baker-Campbell-Hausdorff formulas, we  split sum in the exponential and arrive at 
\begin{align}
\bE_{\cU_\infty (t;0) \Omega} \left[ e^{i \phi (h_t)} \right]   =& e^{- \| f_{0;t} \|_2^2 /2} \expval{e^{i a^*(f_{0;t})} e^{i 
 a(f_{0;t}) }}{\Omega }  = e^{- \| f_{0;t} \|_2^2/2 } \; . 
\end{align}
\end{proof}

\begin{proof}[Proof of Theorem \ref{thm:char}]
Combining now Lemma \ref{lemma:step1}, Lemma \ref{lemma:step2} and Lemma \ref{lemma:step3}, we arrive at Theorem \ref{thm:char}. 
\end{proof}

\subsection*{Acknowledgements}  S.R. would like to thank Robert Seiringer and Benedikt Stufler for helpful discussions. Funding from the European Union's Horizon 2020 research and innovation programme under the ERC grant agreement No. 694227 and under the Marie Sk\l{}odowska-Curie  Grant Agreement No. 754411 is gratefully acknowledged.

\end{document}